\theoremstyle{plain}
\newtheorem{theorem}{Theorem}[section]
\newtheorem{proposition}[theorem]{Proposition}
\theoremstyle{remark}
\newtheorem{remark}{Remark}
\newtheorem{example}{Example}
\newcommand{\RR}{\mathbb R}
\newcommand{\HH}{\mathcal{H}}
\newcommand{\KK}{\mathcal{K}}
\newcommand{\D}{\mathcal{D}}
\renewcommand{\S}{\mathcal{S}}
\renewcommand{\P}{\mathcal{P}}
\newcommand{\A}{\mathcal{A}}
\renewcommand{\L}{\mathcal{L}}
\newcommand{\E}{\mathcal{E}}
\newcommand{\U}{\mathcal{U}}
\newcommand{\G}{\mathcal{G}}
\renewcommand{\u}{\mathfrak{u}}
\newcommand{\g}{\mathfrak{g}}
\newcommand{\T}{\operatorname{T}\!}
\renewcommand{\H}{\operatorname{H}\!}
\newcommand{\V}{\operatorname{V}\!}
\newcommand{\Ker}{\operatorname{Ker}}
\newcommand{\Ad}{\operatorname{Ad}}
\newcommand{\Tr}{\operatorname{Tr}}
\newcommand{\eps}{\varepsilon}
\newcommand{\1}{{\mathbf 1}}
\newcommand{\0}{{\mathbf 0}}
\newcommand{\dd}[1]{\frac{\operatorname{d}}{\operatorname{d}\!#1}}
\newcommand{\ket}[1]{\lvert #1 \rangle}
\newcommand{\bra}[1]{\langle #1 \rvert}
\newcommand{\ketbra}[2]{\lvert #1 \rangle \langle #2 \rvert}
\newcommand{\obs}[1]{\mathbf{#1}}
\begin{document}

\title[Geometric uncertainty relation for mixed quantum states]{Geometric uncertainty relation for mixed quantum states}

\author{Ole Andersson}
\email{olehandersson@gmail.com}
\author{Hoshang Heydari}
\thanks{The second author acknowledges the financial support from the Swedish Research Council (VR), grant number
2008-5227.}
\email{hoshang@fysik.su.se}
\address{Department of Physics, Stockholm University, 10691 Stockholm, Sweden}

\date{\today}

\begin{abstract}
In this paper we use symplectic reduction in an Uhlmann bundle to construct a principal fiber bundle over a general space of unitarily equivalent mixed quantum states. The bundle, which generalizes the Hopf bundle for pure states, gives in a canonical way rise to a Riemannian metric and a symplectic structure on the base space. With these we derive a geometric uncertainty relation for observables acting on quantum systems in mixed states. We also give a geometric proof of the classical Robertson-Schr{\"o}dinger uncertainty relation, and we compare the two. 
They turn out not to be equivalent, because of the multiple dimensions of the gauge group for general mixed states.
We give examples of observables for which the geometric relation provides a stronger estimate than that of Robertson and Schr\"{o}dinger, and vice versa.
\end{abstract}
\keywords{uncertainty relation; mixed state; purification; symplectic reduction}
\maketitle 

\section{Introduction}\label{introduction}
Modern methods in quantum mechanics and quantum information theory have given rise to several generalizations of the classic Heisenberg-Kennard-Weyl \cite{Heisenberg1927, Kennard1927, Weyl1928} and Robertson-Schr{\"o}dinger \cite{Robertson_1929, Schrodinger_1930} uncertainty relations.
These include relations involving the Wigner-Yanase skew-information \cite{Lou2003,Lou2005, Park2005}, purity-bounded  \cite{Dodonov2002} and entropic uncertainty relations \cite{Wehner_etal2010}, and uncertainty relations for non-Hamiltonian systems \cite{Ingarden1973,Dodonov_etal1980,Tarasov2013}.
In this paper,  we adopt a symplectic geometric approach  to uncertainty relations for quantum systems in mixed states.

Common for classical and pure quantum systems is that their phase spaces are symplectic manifolds and their observables generate symplectic flows \cite{Gunter_1977,Kibble_1979,Ashtekar_etal1998,Brody_etal1999}.
Quantum systems, however, exhibit characteristics that have no classical counterparts.
One is the impossibility to fully predict results of measurements.

Observables in classical mechanics are represented by real-valued functions, and the result of a measurement of an 
observable equals the value of the corresponding function at the point in phase space that labels the system's state. This means that measurement results are completely predictable in classical mechanics.
The situation in quantum mechanics is quite different. There, observables are represented by self-adjoint operators, 
and only expectation values and uncertainties of observables can be calculated. The actual value of an 
observable cannot in general be known prior to measurement.
Furthermore, there is a limit to the precision with which values of pairs of observables can be known simultaneously.
This is the famous quantum uncertainty principle. 

A quantum system prepared in a pure state can be modeled on a projective Hilbert space equipped with the Fubini-Study K\"{a}hler metric. The 
real and imaginary parts of this metric (suitably scaled) equips the projective Hilbert space with Riemannian and symplectic structures, and thus with Riemann and Poisson brackets. 
For observables acting on systems in pure states it has been shown \cite{Ashtekar_etal1998} that the Robertson-Schr\"{o}dinger uncertainty relation can be expressed entirely in terms of the Riemann and Poisson brackets of the observables' expectation value functions. In this paper we generalize this result to systems in mixed states.

The state of an experimentally prepared quantum system usually exhibits classical uncertainty -- it is \emph{mixed}, 
i.e., an incoherent superposition of pure states. Mixed states are commonly represented by density operators.
The spectrum of a density operator representing a quantum system's state is preserved when the system evolves according to a von Neumann equation. In this paper we equip the spaces of isospectral finite rank density operators with natural Riemannian and symplectic structures and we derive a geometric uncertainty relation for observables acting on systems in mixed states\footnote{By a mixed state we henceforth mean a \emph{finite} incoherent superposition of pure states. Such a mixed state can be represented by a density operator of finite rank.}.
We also give a geometric proof of the Robertson-Schr\"{o}dinger uncertainty relation, and we compare the two uncertainty relations.
It turns out that in general they are not equivalent. We show explicitly why this is so and we give examples of observables for which the geometric relation provides a stronger estimate than that of Robertson and Schr\"{o}dinger, and vice versa.
This work is inspired by \cite{Montgomery1991}.

\section{Purification of mixed states and symplectic reduction}
This paper deals with quantum systems in mixed states which evolves unitarily.
The systems will be modeled on an unspecified Hilbert space $\HH$ and their states will be represented by density operators.
Recall that a density operator is a positive trace-class operator with unit trace. 
We write $\D(\HH)$ for the space of density operators on $\HH$, and $\D_k(\HH)$ for the space of density operators on $\HH$ which has finite 
rank at most $k$.

A density operator whose evolution is governed by a von Neumann equation will remain in a single orbit for the left conjugation-action of the 
unitary group $\U(\HH)$ of $\HH$ on $\D(\HH)$. The orbits are in one-to-one correspondence with the possible spectra of density operators on 
$\HH$. We show in this section how symplectic reduction on the total spaces of suitably chosen Uhlmann purifications produces natural symplectic 
and Riemannian structures on the orbits of density operators of finite rank.

\subsection{Uhlmann purification of mixed states}
A quantum state is called pure if it can be represented by a single unit vector, or, equivalently, if its density operator has $1$-dimensional 
support. The term ``purification'' refers to the fact that every density operator can be regarded as a reduced pure state. 
Let $k$ be a positive integer and $\KK$ be a $k$-dimensional Hilbert space.
A particular purification reduction of mixed states represented by density operators of rank at most $k$, which we call the Uhlmann purification (for rank $k$ density operators)
\cite{Uhlmann1976,Uhlmann1986,Uhlmann1989,Uhlmann1991}, is given by the surjective map
\begin{equation}\label{Uhlmann purification}
\S(\KK,\HH)\longrightarrow\D_k(\HH),\qquad\psi\mapsto\psi\psi^\dagger.
\end{equation}
The domain of the Uhlmann purification is the unit sphere in the space $\L(\KK,\HH)$ of linear operators from $\KK$ to $\HH$ equipped with the 
Hilbert-Schmidt Hermitian inner product.

\begin{remark}
If a quantum system modeled on $\HH$ can be considered a subsystem of a larger quantum system, 
the Hilbert space of the larger system decomposes as a product $\HH\otimes\KK$ and the subsystem's 
state is represented by the partial trace with respect to $\KK$ of the density operator representing the larger system's state. 
Partial trace reduction is justified by the fact that it is the only reduction of density 
operators on $\HH\otimes\KK$ that respects expectation values of local observables \cite{Nielsen_etal2010}: If $\obs{A}$ is an observable on 
$\HH$ and 
$\rho=\Tr_{\KK}R$, where $R$ is a density operator on $\HH\otimes\KK$, then $\Tr(\obs{A}\rho)=\Tr((\obs{A}\otimes\obs{1})R)$. 
Now, the Uhlmann purification is a special case of partial trace reduction. If we identify $\L(\KK,\HH)$ with $\HH\otimes\KK^*$, write 
$\S(\HH\otimes\KK^*)$ for the unit sphere in $\HH\otimes\KK^*$, and $\P(\HH\otimes\KK^*)$ for the projective space over $\HH\otimes\KK^*$, 
then \eqref{Uhlmann purification} is given by the composition 
\begin{equation} \label{pur}
\begin{tikzcd}
\S(\HH\otimes\KK^*) \arrow{rr}{\ket{\psi}\mapsto\ketbra{\psi}{\psi}} & &\P(\HH\otimes\KK^*) \arrow{r}{\Tr_{\KK^*}}  &\D_k(\HH).
\end{tikzcd}
\end{equation}
\end{remark}

\subsection{Reduced bundles of purifications}
By the spectrum of a density operator with $k$-dimensional support we mean the sequence 
$\sigma=(p_1, p_2, \dots, p_k)$ 
of the density operator's \emph{positive} eigenvalues listed in descending order. Throughout the 
rest of this paper we fix such a spectrum $\sigma$ and write $\D(\sigma)$ for the corresponding $\U(\HH)$-orbit in $\D_k(\HH)$. Furthermore, we 
fix a $k$-dimensional Hilbert space $\KK$ and an orthonormal basis $\{\ket{1},\ket{2},\dots,\ket{k}\}$ in $\KK$.

Let $\S(\sigma)$ be the set of maps $\psi$ in $\L(\KK,\HH)$ satisfying $\psi^\dagger\psi=P$, where 
\begin{equation}
P=\sum_{j=1}^kp_j\ketbra{j}{j},
\end{equation}
and define 
$\pi:\S(\sigma)\to\D(\sigma)$ as the restriction of the Uhlmann purification to $\S(\sigma)$.
Then $\pi$ is a principal fiber bundle with gauge group
$\U(\sigma)$ consisting of all unitary operators on $\KK$ which commutes with $P$. The action of $\U(\sigma)$ on $\S(\sigma)$ is induced by the 
canonical right action of the unitary group $\U(\KK)$ on $\L(\KK,\HH)$. We write $\u(\sigma)$ for the Lie algebra of $\U(\sigma)$. This algebra consists of all anti-Hermitian operators on $\KK$ which commutes with $P$.

\begin{example}
If the operators in $\D(\sigma)$ represents pure states, i.e., if $\sigma=(1)$, then $\S(\sigma)$ is the unit sphere in $\HH$, $\D(\sigma)$ is the projective space over $\HH$, and $\pi$ is the generalized Hopf bundle.
\end{example}

\begin{remark}
For a general spectrum the total space $\S(\sigma)$ is diffeomorphic to the Stiefel variety of $k$-frames in $\HH$. But $\S(\sigma)$ will be 
equipped with a Riemannian metric which is different from the metric on the Stiefel variety induced by the standard bi-invariant metric on 
$\U(\HH)$.
\end{remark}

When you purify a state you artificially add extraneous, perhaps inaccessible, information to your state. One way to get rid of this additional 
information is to use symplectic reduction. Next we will show that $\pi$ is a reduced phase space submersion generated by a 
symplectic reduction on $\L(\KK,\HH)$ by the action of $\U(\KK)$. The existence of a symplectic structure on $\D(\sigma)$ will then follow from 
the classic symplectic reduction theorem by Marsden and Weinstein \cite{Marsden_etal1974}:

\begin{theorem}\label{symplectic reduction theorem}
Consider a symplectic manifold $(M,\Omega)$ on which there is a Hamiltonian right action of a Lie group $\G$ 
with a coadjoint-equivariant momentum map $J: M\to\g^*$. Assume that $\mu$ is a regular value of $J$ 
and that the isotropy group $\G_\mu$ of $\mu$ acts freely and properly on $J^{-1}(\mu)$. Then the orbit projection $\pi_\mu:J^{-1}(\mu)\to 
J^{-1}(\mu)/\G_\mu$ is a principal $\G_\mu$-bundle, and 
the reduced phase space $J^{-1}(\mu)/\G_\mu$ has a unique symplectic form $\omega_\mu$ characterized by 
$\pi_\mu^*\omega_\mu=\Omega|_{J^{-1}(\mu)}$.
\end{theorem}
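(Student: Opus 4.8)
The plan is to follow the classical Marsden--Weinstein argument \cite{Marsden_etal1974}, which decomposes into three essentially independent ingredients: a regular-value argument identifying the level set as a submanifold, a quotient-manifold argument giving the bundle structure, and a piece of symplectic linear algebra that produces the reduced form.

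First I would invoke that $\mu$ is a regular value of $J$ to conclude, by the regular value theorem, that $J^{-1}(\mu)$ is a closed embedded submanifold of $M$ with $T_mJ^{-1}(\mu)=\Ker dJ_m$ of codimension $\dim\g$ at every point $m$. Next, coadjoint equivariance of $J$ shows that the $\G$-action carries $J^{-1}(\mu)$ onto the level sets $J^{-1}(\Ad^*_{(\cdot)}\mu)$, so it restricts to an action of the isotropy group $\G_\mu$ on $J^{-1}(\mu)$; since this restricted action is assumed free and proper, the quotient manifold theorem yields that $J^{-1}(\mu)/\G_\mu$ is a smooth manifold and that $\pi_\mu$ is a principal $\G_\mu$-bundle, which is the first assertion of the theorem.

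The core is pointwise. Fixing $m\in J^{-1}(\mu)$, let $\g\cdot m$ and $\g_\mu\cdot m$ denote the tangent spaces at $m$ to the orbits of $\G$ and of $\G_\mu$, and write $\xi_M$ for the infinitesimal generator of $\xi\in\g$. The defining relation of the momentum map, $dJ_m(v)(\xi)=\Omega_m(\xi_M(m),v)$, together with nondegeneracy of $\Omega_m$, gives $\Ker dJ_m=(\g\cdot m)^{\Omega_m}$, the $\Omega_m$-orthogonal complement, and hence also $(\Ker dJ_m)^{\Omega_m}=\g\cdot m$; the infinitesimal form of equivariance then identifies $\Ker dJ_m\cap\g\cdot m$ with $\g_\mu\cdot m$. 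Consequently the restriction of $\Omega_m$ to $T_mJ^{-1}(\mu)=\Ker dJ_m$ is a skew form whose radical equals $\Ker dJ_m\cap(\Ker dJ_m)^{\Omega_m}=\g_\mu\cdot m$, i.e.\ the tangent space to the fibre of $\pi_\mu$ through $m$; so $\Omega_m$ restricted to $T_mJ^{-1}(\mu)$ descends to a nondegenerate bilinear form on $T_{[m]}\bigl(J^{-1}(\mu)/\G_\mu\bigr)$.

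It then remains to assemble these pointwise forms into a global $2$-form $\omega_\mu$ on the quotient satisfying $\pi_\mu^*\omega_\mu=\Omega|_{J^{-1}(\mu)}$. Invariance of $\Omega$ under $\G$ makes $\Omega|_{J^{-1}(\mu)}$ invariant under $\G_\mu$, and the previous paragraph shows it annihilates the vertical distribution of the bundle $\pi_\mu$; a standard argument with local sections of $\pi_\mu$ then produces a smooth $\omega_\mu$ with the required pullback property, necessarily unique since $\pi_\mu$ is a surjective submersion and therefore $\pi_\mu^*$ is injective on differential forms. Nondegeneracy of $\omega_\mu$ is the pointwise statement above, and $d\omega_\mu=0$ follows from $\pi_\mu^*d\omega_\mu=d(\Omega|_{J^{-1}(\mu)})=0$ together with the same injectivity. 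I expect the main obstacle to be precisely this descent step --- verifying that an invariant, horizontal $2$-form on the total space of a principal bundle really is the pullback of a smooth form on the base --- since everything else is either a direct application of the implicit function and quotient manifold theorems or a brief linear-algebra computation.
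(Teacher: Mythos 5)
Your proposal is a correct rendition of the standard Marsden--Weinstein argument: the regular-value and quotient-manifold steps, the pointwise identification of the radical of $\Omega|_{\Ker dJ_m}$ with the tangent space $\g_\mu\cdot m$ to the $\G_\mu$-orbit, and the descent of the invariant horizontal form are all sound. The paper itself supplies no proof of this theorem --- it is quoted as the classical result of Marsden and Weinstein with a citation to the original --- so there is nothing to compare against beyond noting that your argument is precisely the one in that reference.
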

Recall that a right action by a Lie group $\G$ on a symplectic manifold $(M,\Omega)$ is Hamiltonian if the action is by symplectic 
transformations $R_g$ and it has a coadjoint-equivariant momentum map $J:M\to\g^*$, thus $J\circ R_g=\Ad_g^*\circ J$ where $g\mapsto 
\Ad_g^*$ is the right coadjoint representation of $\G$ on $\g^*$. 
Recall also the following effective way to determine if $\mu$ is a regular value of $J$: The symmetry algebra of $x$ in $M$ is the subalgebra 
$\g_x$ of $\g$ consisting of those elements whose corresponding fundamental vector fields vanishes at $x$. Now $\mu$ is a regular value for $J$ 
if the symmetry algebra  for every $x$ in $J^{-1}(\mu)$ vanishes:

\begin{proposition}
A functional $\mu$ is a regular value of $J$ if $\g_x=\{0\}$ for every $x$ in $J^{-1}(\mu)$.
\end{proposition}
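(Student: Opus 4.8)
\emph{Proof plan.} The plan is to deduce regularity of $\mu$ from the hypothesis by using the defining identity of the momentum map, which converts a statement about the rank of $dJ$ into a statement about the infinitesimal action. Recall that $\mu$ being a regular value of $J$ means precisely that the differential $dJ_x\colon\T_xM\to\g^*$ is surjective for every $x\in J^{-1}(\mu)$, so this is what I would establish.

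First I would invoke the fundamental property of a momentum map: for every $\xi\in\g$ the component $J^\xi:=\langle J,\xi\rangle$ satisfies $dJ^\xi=\iota_{\xi_M}\Omega$, where $\xi_M$ denotes the fundamental vector field generated by $\xi$; equivalently,
\[
\langle dJ_x(v),\xi\rangle=\Omega_x\bigl(\xi_M(x),v\bigr)\qquad\text{for all }v\in\T_xM,\ \xi\in\g.
\]
Since $\g$ is finite dimensional (in the situations of interest $\G$ is a unitary group on the finite-dimensional space $\KK$), the linear map $dJ_x$ is onto $\g^*$ if and only if the annihilator of its image inside $\g^{**}=\g$ is trivial. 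I would then identify this annihilator: an element $\xi\in\g$ annihilates $\operatorname{im}(dJ_x)$ exactly when $\Omega_x(\xi_M(x),v)=0$ for every $v\in\T_xM$, and since $\Omega_x$ is nondegenerate this holds if and only if $\xi_M(x)=0$, i.e.\ if and only if $\xi\in\g_x$. So the annihilator of $\operatorname{im}(dJ_x)$ coincides with $\g_x$; by hypothesis it vanishes, hence $dJ_x$ is surjective. As $x$ was an arbitrary point of $J^{-1}(\mu)$, this proves that $\mu$ is a regular value.

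There is no genuine obstacle here; the one step worth a moment's attention is the duality argument, which needs $\g$ finite dimensional so that surjectivity of $dJ_x$ is detected by the vanishing of the annihilator of its image (equivalently, so that $dJ_x$ automatically has closed range). The manifold $M$ itself may be infinite dimensional --- in our setting $M=\L(\KK,\HH)$ --- but that causes no difficulty because the codomain $\g^*$ is finite dimensional.
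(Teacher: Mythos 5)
Your proof is correct and follows essentially the same route as the paper: both use the momentum map identity $\langle dJ_x(v),\xi\rangle=\Omega_x(\hat\xi(x),v)$ to identify the annihilator of $\operatorname{im}(dJ_x)$ with the symmetry algebra $\g_x$, and conclude surjectivity from nondegeneracy of $\Omega$. Your added remark on finite-dimensionality of $\g$ simply makes explicit the duality step the paper leaves implicit.
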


\begin{proof}
A point $x$ in $M$ is regular if $dJ$ maps the tangent space at $x$ onto $\g^*$.
This is equivalent to $\0$ being the only element $\xi$ in $\g$ that satisfies $dJ(X)(\xi)=0$ for all tangent vectors $X$ at $x$.
Now $dJ(X)(\xi)=\Omega(\hat{\xi}(x),X)$, where $\hat{\xi}$ is the fundamental vector field on $M$ corresponding to $\xi$. The proposition thus 
follows from the nondegeneracy of $\Omega$.
\end{proof}

\subsubsection{Symplectic structure}
A Riemannian metric and a symplectic form on $\L(\KK,\HH)$ is given by $2\hbar$ times the real part and the imaginary part, respectively, of the 
Hilbert-Schmidt inner product:
\begin{equation}\label{Gustaf Otto}
G(X,Y)=\hbar\Tr(X^\dagger Y + Y^\dagger X),\qquad \Omega(X,Y)=-i\hbar\Tr(X^\dagger Y-Y^\dagger X).
\end{equation}
Furthermore, the unitary group $\U(\KK)$ acts on $\L(\KK,\HH)$ from the right by isometric and symplectic transformations, $R_U(\psi)= \psi U$.
We write $\u(\KK)$ for the space of anti-Hermitian operators on $\KK$, i.e., the Lie algebra of $\U(\KK)$, and $\hat{\xi}$ for the fundamental 
vector field corresponding to $\xi$ in $\u(\KK)$:
\begin{equation}
\hat{\xi}(\psi)=\dd{t}\Big[R_{\exp(t\xi)}(\psi)\Big]_{t=0}=\psi\xi.
\end{equation}
Moreover, we write $\u(\KK)^*$ for the space of functionals on $\u(\KK)$ and recall that every member in $\u(\KK)^*$ 
is of the form $\mu_\eta$ for some Hermitian operator $\eta$ on $\KK$, where 
$\mu_\eta(\xi)=i\hbar\Tr(\eta\xi)$.
Define $J:\L(\KK,\HH) \to \u(\KK)^*$ by $J(\psi)=\mu_{\psi^\dagger\psi}$.
\begin{proposition}
$J$ is a coadjoint-equivariant momentum map for the $\U(\KK)$ action on $\L(\KK,\HH)$.
\end{proposition}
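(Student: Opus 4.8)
\emph{Proof proposal.} The plan is to verify directly the two conditions that define a coadjoint-equivariant momentum map. First I would record that $J$ is well defined and smooth: since $\KK$ is $k$-dimensional, $\psi^\dagger\psi$ is a trace-class (indeed finite-rank) Hermitian operator on $\KK$ for every $\psi\in\L(\KK,\HH)$, so $\mu_{\psi^\dagger\psi}$ genuinely lies in $\u(\KK)^*$, and $\psi\mapsto\psi^\dagger\psi$ is quadratic, hence smooth. Then I would check the momentum-map identity. Fix $\xi\in\u(\KK)$ and put $J_\xi(\psi)=J(\psi)(\xi)=i\hbar\Tr(\psi^\dagger\psi\,\xi)$; the claim is $dJ_\xi=\iota_{\hat{\xi}}\Omega$, which is the convention already used above. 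Differentiating $\psi\mapsto\psi^\dagger\psi$ along $X$ gives $X^\dagger\psi+\psi^\dagger X$, so $dJ_\xi(X)=i\hbar\Tr\!\big((X^\dagger\psi+\psi^\dagger X)\xi\big)$. On the other side, using $\hat{\xi}(\psi)=\psi\xi$, the formula \eqref{Gustaf Otto} for $\Omega$, the anti-Hermiticity $\xi^\dagger=-\xi$, and cyclicity of the trace, one rewrites $\Omega(\psi\xi,X)=-i\hbar\Tr\!\big(\xi^\dagger\psi^\dagger X-X^\dagger\psi\xi\big)=i\hbar\Tr\!\big(\psi^\dagger X\xi+X^\dagger\psi\xi\big)$, which agrees with $dJ_\xi(X)$. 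This is the bulk of the work, and it is a routine trace computation.

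For coadjoint equivariance I would compute $J(R_U\psi)=J(\psi U)=\mu_{(\psi U)^\dagger(\psi U)}=\mu_{U^\dagger(\psi^\dagger\psi)U}$, and separately identify the right coadjoint action from $\mu_\eta(U\xi U^\dagger)=i\hbar\Tr(\eta\,U\xi U^\dagger)=i\hbar\Tr(U^\dagger\eta U\,\xi)$, i.e. $\Ad_U^*\mu_\eta=\mu_{U^\dagger\eta U}$. Taking $\eta=\psi^\dagger\psi$, the two expressions coincide, so $J\circ R_U=\Ad_U^*\circ J$, as required.

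The only point that needs any care — the ``main obstacle'', such as it is — is the bookkeeping of the sign and side conventions: one must make sure that the right coadjoint representation $U\mapsto\Ad_U^*$ in play is the one pairing with $\Ad_U$ rather than with $\Ad_{U^{-1}}$, consistently with the notion of Hamiltonian right action recalled after Theorem \ref{symplectic reduction theorem}, so that $J$ comes out equivariant and not anti-equivariant. Once those conventions are pinned down, both verifications reduce to the short trace manipulations above.
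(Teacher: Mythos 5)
Your proposal is correct and follows essentially the same route as the paper: a direct trace computation showing $dJ_\xi(X)=\Omega(\hat{\xi}(\psi),X)$, and cyclicity of the trace to identify $J(\psi U)$ with $\Ad_U^*J(\psi)$. The extra remarks on smoothness and on pinning down the right-action conventions are fine but not needed beyond what the paper already fixes.
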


\begin{proof}
Let $\xi$ be an anti-Hermitian operator on $\KK$, $X$ be a vector tangent to $\L(\KK,\HH)$ at $\psi$, and 
$U$ be a unitary operator on $\KK$.
Define $J_\xi:\L(\KK,\HH)\to\RR$ by $J_\xi(\psi)=J(\psi)(\xi)$. Then,
\begin{align}
&dJ_\xi(X)=i\hbar\Tr(X^\dagger\psi\xi+\psi^\dagger X\xi)=\Omega(\hat{\xi}(\psi),X),\\
&J(R_U(\psi))(\xi)=i\hbar \Tr((\psi U)^\dagger (\psi U)\xi)=i\hbar\Tr(\psi^\dagger\psi U\xi U^\dagger)=\Ad^*_U J(\psi)(\xi).
\end{align}
Thus, $\hat{\xi}$ is the Hamiltonian vector field for $J_\xi$, and $J$ is coadjoint-equivariant.
\end{proof}

Let $\E(\KK)$ be the space of functionals on $\u(\KK)$ that are represented by density operators on $\KK$.
We have that $J$ maps $\S(\KK,\HH)$ into $\E(\KK)$ because $\psi^\dagger\psi$ is a density operator on $\KK$ if $\psi$ is a normalized linear 
map from $\KK$ into $\HH$. In fact, $J$ maps $\S(\KK,\HH)$ \emph{onto} $\E(\KK)$, and $J^{-1}(\mu_\eta)$ is a subset of $\S(\KK,\HH)$ for every 
density operator $\eta$ on $\KK$ because $\mu_\eta=\mu_{\psi^\dagger\psi}$ implies $\Tr(\psi^\dagger\psi)=\Tr(\eta)=1$.
Two density operators on $\KK$ are unitarily equivalent, or, equivalently, their corresponding functionals belong to the same coadjoint orbit in 
$\u(\KK)^*$, if and only if they have identical spectra. The orbit corresponding to the spectrum $\sigma$ contains the density operator $P$ and 
$\S(\sigma)=J^{-1}(\mu_P)$.

\begin{proposition}
The functional $\mu_P$ is a regular value for $J$, the isotropy group of $\mu_P$ is $\U(\sigma)$, and $\U(\sigma)$ acts freely and properly on 
$\S(\sigma)$.
\end{proposition}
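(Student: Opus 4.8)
The plan is to verify the three assertions in turn, the common engine being the observation that every $\psi$ in $\S(\sigma)$ is injective: from $\psi^\dagger\psi=P$ and the positivity of the eigenvalues $p_1,\dots,p_k$ it follows that $P$ is invertible on $\KK$, so $\psi\xi=0$ forces $\xi=P^{-1}\psi^\dagger\psi\xi=0$.

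First I would identify the isotropy group. From the equivariance computation in the previous proposition, $J(R_U(\psi))(\xi)=i\hbar\Tr(U^\dagger\psi^\dagger\psi U\xi)$, so the coadjoint action on $\u(\KK)^*$ is $\mu_\eta\mapsto\mu_{U^\dagger\eta U}$; hence the isotropy group of $\mu_P$ is $\{\,U\in\U(\KK):U^\dagger PU=P\,\}=\{\,U\in\U(\KK):UP=PU\,\}$, which is exactly $\U(\sigma)$ by definition.

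Next, for the regular value claim I would invoke the preceding proposition: it suffices to show that the symmetry algebra vanishes at every $\psi$ in $J^{-1}(\mu_P)=\S(\sigma)$. But $\xi$ lies in this symmetry algebra precisely when $\hat\xi(\psi)=\psi\xi=0$, which by injectivity of $\psi$ forces $\xi=0$. The same injectivity settles freeness of the $\U(\sigma)$-action on $\S(\sigma)$: if $R_U(\psi)=\psi U=\psi$, then $\psi(U-\1)=0$, hence $U=\1$.

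Finally, properness is automatic. Since $\mu_P$ is a regular value, $\S(\sigma)=J^{-1}(\mu_P)$ is a submanifold of $\L(\KK,\HH)$, in particular Hausdorff; and $\KK$ is $k$-dimensional, so $\U(\KK)$ is compact, whence its closed subgroup $\U(\sigma)$ is compact, and any continuous action of a compact group on a Hausdorff manifold is proper. I do not expect a genuine obstacle here; the only point requiring care is the left/right bookkeeping in the coadjoint action, so that the isotropy group of $\mu_P$ comes out as the commutant of $P$ itself rather than a conjugate of it.
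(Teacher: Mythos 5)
Your proposal is correct and follows essentially the same route as the paper: the key identity $\xi=P^{-1}\psi^\dagger\psi\xi$ handles both the vanishing of the symmetry algebra and freeness, the isotropy group is computed from the coadjoint action as the commutant of $P$, and properness follows from compactness of $\U(\sigma)$. The extra detail you supply on the left/right bookkeeping and on why compactness gives properness is sound but does not change the argument.
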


\begin{proof}
Suppose $J(\psi)=\mu_P$ and assume $\xi$ in $\u(\KK)$ is such that $\hat{\xi}(\psi)=\0$. Then $\xi=P^{-1}\psi^\dagger\psi\xi=\0$. Thus the 
symmetry algebra of $\psi$ is trivial.
Moreover, the isotropy group of $\mu_P$ is $\U(\sigma)$ because $i\hbar\Tr(PU\xi U^\dagger)=i\hbar\Tr(P\xi)$ for every $\xi$ in $\u(\KK)$ is equivalent to the assertion that $U$ commutes with $P$. Finally, $\U(\sigma)$ acts freely on $\S(\sigma)$ since $\psi 
U=\psi$ implies $U=P^{-1}\psi^\dagger\psi U=P^{-1}\psi^\dagger\psi=\1$, and properly because it is compact.
\end{proof}

\begin{theorem}\label{symplectic main theorem}
$\D(\sigma)$ admits a unique symplectic form $\omega$ such that $\pi^*\omega=\Omega|_{\S(\sigma)}$.
\end{theorem}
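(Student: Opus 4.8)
The plan is to derive $\omega$ by applying the Marsden--Weinstein reduction theorem (Theorem~\ref{symplectic reduction theorem}) to the right action of $\U(\KK)$ on the symplectic manifold $(\L(\KK,\HH),\Omega)$ at the value $\mu_P$ of the momentum map $J$, and then to transport the resulting reduced symplectic form along the diffeomorphism between the reduced phase space and $\D(\sigma)$ that $\pi$ induces.

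All hypotheses of Theorem~\ref{symplectic reduction theorem} have already been established in the propositions above: $\Omega$ is symplectic on $\L(\KK,\HH)$, the action $R_U(\psi)=\psi U$ is Hamiltonian with coadjoint-equivariant momentum map $J(\psi)=\mu_{\psi^\dagger\psi}$, the functional $\mu_P$ is a regular value of $J$, its isotropy group is $\U(\sigma)$, and $\U(\sigma)$ acts freely and properly on $J^{-1}(\mu_P)=\S(\sigma)$. Coadjoint-equivariance also shows that $\U(\sigma)$ preserves $\S(\sigma)$, since $J(\psi U)=\Ad^*_U J(\psi)=\Ad^*_U\mu_P=\mu_P$ whenever $U\in\U(\sigma)$. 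Theorem~\ref{symplectic reduction theorem} then produces a principal $\U(\sigma)$-bundle $\pi_\mu:\S(\sigma)\to\S(\sigma)/\U(\sigma)$ and a unique symplectic form $\omega_\mu$ on the reduced space with $\pi_\mu^*\omega_\mu=\Omega|_{\S(\sigma)}$.

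Next I would identify this reduced space with $\D(\sigma)$. Since $\pi(\psi U)=\psi U U^\dagger\psi^\dagger=\psi\psi^\dagger=\pi(\psi)$, the surjection $\pi$ is $\U(\sigma)$-invariant and therefore factors as $\pi=\bar\pi\circ\pi_\mu$ for a unique smooth $\bar\pi:\S(\sigma)/\U(\sigma)\to\D(\sigma)$. Because $\pi$ is a principal $\U(\sigma)$-bundle its fibres are exactly the $\U(\sigma)$-orbits, which are the fibres of $\pi_\mu$ as well; hence $\bar\pi$ is a fibrewise bijection, and its differential is surjective (from $d\pi=d\bar\pi\circ d\pi_\mu$ with $d\pi$ surjective) and injective (since $\pi$ and $\pi_\mu$ have the same vertical subspaces), so $\bar\pi$ is a bijective local diffeomorphism and hence a diffeomorphism. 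Defining $\omega:=(\bar\pi^{-1})^*\omega_\mu$ gives a symplectic form on $\D(\sigma)$ with $\pi^*\omega=\pi_\mu^*\bar\pi^*\omega=\pi_\mu^*\omega_\mu=\Omega|_{\S(\sigma)}$. For uniqueness, if $\omega'$ also satisfies $\pi^*\omega'=\Omega|_{\S(\sigma)}$, then $\pi^*(\omega-\omega')=0$, and since $\pi$ is a surjective submersion its differential is onto at every point, which forces $\omega-\omega'=0$.

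I expect no real obstacle here, since the substantive work -- verifying the regularity of $\mu_P$ and the freeness and properness of the $\U(\sigma)$-action -- was already carried out in the preceding propositions, and what remains is bookkeeping. The only point that deserves a little care is the passage from the abstract reduced space $\S(\sigma)/\U(\sigma)$ to the concrete orbit space $\D(\sigma)$, together with the fact that $\HH$ is left unspecified and possibly infinite-dimensional; this causes no trouble, because $\KK$ is finite-dimensional, so $J$ has finite-dimensional target and $\S(\sigma)$ is a closed submanifold of finite codimension on which the finite-dimensional reduction picture applies verbatim.
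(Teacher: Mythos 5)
Your overall strategy is exactly the paper's: apply Marsden--Weinstein reduction at $\mu_P$, identify the reduced space $\S(\sigma)/\U(\sigma)$ with $\D(\sigma)$ via the map $\bar\pi$ induced by $\pi$, and push the reduced form forward; the reduction step and the uniqueness argument are fine. But there is a genuine gap at the identification step: you justify injectivity of $\bar\pi$ by saying that ``because $\pi$ is a principal $\U(\sigma)$-bundle its fibres are exactly the $\U(\sigma)$-orbits.'' That principal-bundle property was only \emph{asserted}, not proved, when $\pi$ was introduced, and it is precisely equivalent to the injectivity you need, so invoking it here is circular. What must actually be shown is that the right $\U(\sigma)$-action is transitive on the fibres of $\pi$: if $\psi,\phi\in\S(\sigma)$ satisfy $\psi\psi^\dagger=\phi\phi^\dagger$, then $\phi=\psi U$ for some $U\in\U(\sigma)$. (The factorization $\pi=\bar\pi\circ\pi_\mu$ and surjectivity are the easy directions; this converse is the only substantive point in the whole proof, and it is the one you skip. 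The same issue silently underlies your claim that $\pi$ and $\pi_\mu$ have the same vertical subspaces.)

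The paper closes the gap by an explicit construction: set $U=\psi^\dagger\phi P^{-1}$. Then $\psi U=\psi\psi^\dagger\phi P^{-1}=\phi\phi^\dagger\phi P^{-1}=\phi$, and one checks
\begin{equation*}
U^\dagger U=P^{-1}\phi^\dagger\psi\psi^\dagger\phi P^{-1}=P^{-1}\phi^\dagger\phi\phi^\dagger\phi P^{-1}=\1,
\qquad
PU=P\psi^\dagger\phi P^{-1}=\psi^\dagger\psi\psi^\dagger\phi P^{-1}=\psi^\dagger\phi=UP,
\end{equation*}
so $U$ is unitary (since $\KK$ is finite dimensional) and commutes with $P$, i.e.\ $U\in\U(\sigma)$. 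With this lemma supplied, the rest of your argument --- fibrewise bijectivity, $\bar\pi$ a diffeomorphism, transport of the reduced form, and uniqueness from surjectivity of $d\pi$ --- goes through as written, and your closing remark that finite dimensionality of $\KK$ keeps the reduction picture unproblematic even for infinite-dimensional $\HH$ is a sensible point the paper leaves implicit.
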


\begin{proof}
The orbit projection $\pi_{P}:\S(\sigma)\to\S(\sigma)/\U(\sigma)$ is a principal $\U(\sigma)$-bundle, and the reduced phase space 
$\S(\sigma)/\U(\sigma)$ has a unique symplectic form $\omega_{P}$ such that $\pi_{P}^*\omega_{P}=\Omega|_{\S(\sigma)}$ by theorem 
\ref{symplectic reduction theorem}.  
The projection $\pi$, which maps $\S(\sigma)$ onto $\D(\sigma)$, factorizes through a surjection $\S(\sigma)/\U(\sigma)\to\D(\sigma)$.
This map is also injective. 
For if $\psi\psi^\dagger=\phi\phi^\dagger$, then $\phi=\psi U$ where $U=\psi^\dagger\phi P^{-1}$. The operator $U$ is unitary
and commutes with $P$:
\begin{align}
&U^\dagger U=P^{-1}\phi^\dagger\psi\psi^\dagger\phi P^{-1}=P^{-1}\phi^\dagger\phi\phi^\dagger\phi P^{-1}=P^{-1} P P P^{-1}=\1,\\
&P U=P\psi^\dagger\phi P^{-1}=\psi^\dagger\psi \psi^\dagger\phi P^{-1}=\psi^\dagger\phi \phi^\dagger\phi P^{-1}=\psi^\dagger\phi=UP.
\end{align} 
So $U$ belongs to $\U(\sigma)$.
\end{proof}

\begin{remark}
Another choice of representative than $P$ for the orbit corresponding to $\sigma$ gives  rise to an isomorphic bundle. For if $\tilde P$ is 
another density operator on $\KK$ with spectrum $\sigma$, then $\tilde P=U P U^\dagger$ for some unitary operator on $\KK$. Set  
$\tilde{\S}(\sigma)=\{\psi\in\L(\KK,\HH):\psi^\dagger\psi=\tilde{P}\}$ and define $\tilde{\pi}:\tilde{\S}(\sigma)\to\D(\sigma)$ by 
$\tilde{\pi}(\psi)=\psi\psi^\dagger$. The map $\tilde{\pi}$ is well defined because $\psi\psi^\dagger$ and $\tilde{P}$ have the same spectrum, 
namely $\sigma$, and we have a commutative diagram:
\begin{center}
\begin{tikzcd}
\tilde{\S}(\sigma) \arrow{dr}[swap]{\tilde{\pi}} \arrow{rr}{R_U} & {} & \S(\sigma) \arrow{dl}{\pi}\\
{} & \D(\sigma) & {}
\end{tikzcd}
\end{center}
The gauge groups of $\tilde{\pi}$ and $\pi$ are conjugate equivalent,  $\tilde{\U}(\sigma)=U\U(\sigma)U^\dagger$,
and $R_U$ is an isometric bundle map with respect to $G|_{\tilde{\S}(\sigma)}$ and $G|_{\S(\sigma)}$ such that $R_U^*\Omega|_{\S(\sigma)}=\Omega|_{\tilde{\S}(\sigma)}$.
\end{remark}

\subsubsection{Riemannian structure}
The metric $G$ restricts to a gauge-invariant metric on $\S(\sigma)$. We define the vertical and horizontal bundles over $\S(\sigma)$ to be the 
subbundles $\V\S(\sigma)=\Ker d\pi$ and $\H\S(\sigma)=\V\S(\sigma)^\bot$
of the tangent bundle $\T\S(\sigma)$. Here $d\pi $ is the differential of $\pi$ and $^\bot$ denotes the orthogonal complement with respect to 
$G$. Vectors in $\V\S(\sigma)$ and $\H\S(\sigma)$
are called vertical and horizontal, respectively. We equip $\D(\sigma)$ with the unique metric $g$ which makes $\pi$ a Riemannian submersion. Thus, $g$ is such that the restriction of $d\pi$ to every fiber of $\H\S(\sigma)$ an isometry.

The fundamental vector fields of the gauge group action on $\S(\sigma)$ yield canonical isomorphisms between $\u(\sigma)$ and the fibers in 
$\V\S(\sigma)$. Furthermore, $\H\S(\sigma)$ is the kernel bundle of the gauge invariant mechanical connection $\A:\T\S(\sigma)\to\u(\sigma)$ 
defined by 
$\A_{\psi}=\mathbb{I}_{\psi}^{-1}\mathbb{J}_{\psi}$,
where $\mathbb{I}:\S(\sigma)\times\u(\sigma)\to \u(\sigma)^*$ and $\mathbb{J}:\T\S(\sigma)\to \u(\sigma)^*$ are the locked inertia tensor and 
metric momentum map, respectively:
\begin{equation}
\mathbb{I}_{\psi}\xi(\eta)=G(\hat\xi(\psi),\hat{\eta}(\psi)),\qquad 
\mathbb{J}_{\psi}(X)(\xi)=G(X,\hat\xi(\psi)).
\end{equation}
The inertia tensor is of constant bi-invariant type since $\mathbb{I}_{\psi}$ is an adjoint-invariant form on $\u(\sigma)$ which is independent 
of $\psi$. Thus it defines a metric on $\u(\sigma)$:
\begin{equation}\label{beta}
\xi\cdot\eta=\hbar\Tr\Big(\big(\xi^\dagger \eta+\eta^\dagger \xi\big)P\Big).
\end{equation}
We use this metric to derive an explicit formula for the mechanical connection.
The formula involves the operators
\begin{equation}
\Pi_j=\sum_{i=m_1+\dots+m_{j-1}+1}^{m_1+\dots+m_j}\ketbra{i}{i}\qquad (j=1,2,\dots,l),
\end{equation}
where $m_1, m_2, \dots , m_l$ are the multiplicities of the different eigenvalues in $\sigma$, with $m_1$ being the multiplicity of the greatest eigenvalue, $m_2$ the multiplicity of the second greatest eigenvalue, etc.
\begin{proposition}\label{connect}
Let $X$ be a tangent vector to $\S(\sigma)$ at $\psi$. Then
\begin{equation}
\A_\psi(X)=\sum_{j=1}^l\Pi_j\psi^\dagger X\Pi_jP^{-1}.
\end{equation}
\end{proposition}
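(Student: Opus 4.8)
The plan is to verify the proposed expression directly against the defining relation $\A_\psi=\mathbb{I}_\psi^{-1}\mathbb{J}_\psi$. Unwinding the definitions of the locked inertia tensor and the metric momentum map, and recalling from \eqref{beta} that $\mathbb{I}_\psi$ is the $\psi$-independent, positive-definite (hence invertible) form $\xi,\eta\mapsto\xi\cdot\eta$ on $\u(\sigma)$, the value $\A_\psi(X)$ is the \emph{unique} $\zeta\in\u(\sigma)$ with
\begin{equation*}
\zeta\cdot\eta=\mathbb{J}_\psi(X)(\eta)=G(X,\hat\eta(\psi))=G(X,\psi\eta)\qquad\text{for all }\eta\in\u(\sigma).
\end{equation*}
So it is enough to set $\zeta:=\sum_{j=1}^{l}\Pi_j\psi^\dagger X\Pi_jP^{-1}$, to check that $\zeta$ lies in $\u(\sigma)$, and then to establish the displayed identity.

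For the preliminaries I would write $A:=\psi^\dagger X$ and note that $A$ is anti-Hermitian: differentiating the constraint $\psi^\dagger\psi=P$ along a curve through $\psi$ with velocity $X$ gives $X^\dagger\psi+\psi^\dagger X=0$, i.e.\ $A^\dagger=-A$. The $\Pi_1,\dots,\Pi_l$ are the spectral projections of $P$ onto its distinct eigenvalues $\lambda_1>\dots>\lambda_l>0$, so $\Pi_i\Pi_j=\delta_{ij}\Pi_j$, $\sum_j\Pi_j=\1$, $P\Pi_j=\Pi_jP=\lambda_j\Pi_j$, hence $\Pi_jP^{-1}=P^{-1}\Pi_j=\lambda_j^{-1}\Pi_j$, and an operator on $\KK$ commutes with $P$ exactly when it commutes with every $\Pi_j$. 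From these, $\zeta=BP^{-1}$ with $B:=\sum_j\Pi_jA\Pi_j$ being $P$-block-diagonal and thus commuting with $P$, so $\zeta$ commutes with $P$; and $\zeta^\dagger=\sum_j\lambda_j^{-1}\Pi_jA^\dagger\Pi_j=-\zeta$. Hence $\zeta\in\u(\sigma)$, so it is an admissible candidate for $\A_\psi(X)$.

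Finally I would compute the two sides. Expanding $G$ as in \eqref{Gustaf Otto}, $\mathbb{J}_\psi(X)(\eta)=\hbar\Tr(X^\dagger\psi\eta+\eta^\dagger\psi^\dagger X)=\hbar\big(\Tr(A^\dagger\eta)+\Tr(\eta^\dagger A)\big)$; and, absorbing $P^{-1}$ against the projections via $\Pi_jP=\lambda_j\Pi_j$, $\zeta\cdot\eta=\hbar\Tr\big((\zeta^\dagger\eta+\eta^\dagger\zeta)P\big)=\hbar\sum_j\big(\Tr(\Pi_jA^\dagger\Pi_j\eta)+\Tr(\eta^\dagger\Pi_jA\Pi_j)\big)$. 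The one genuinely non-routine point is the collapse of these block-diagonal sums: because $\eta\in\u(\sigma)$ commutes with every $\Pi_j$, moving a $\Pi_j$ through $\eta$ and then using cyclicity of the trace together with $\Pi_j^2=\Pi_j$ gives $\Tr(\Pi_jA^\dagger\Pi_j\eta)=\Tr(\Pi_jA^\dagger\eta)$, and summing over $j$ with $\sum_j\Pi_j=\1$ yields $\sum_j\Tr(\Pi_jA^\dagger\Pi_j\eta)=\Tr(A^\dagger\eta)$; the same computation gives $\sum_j\Tr(\eta^\dagger\Pi_jA\Pi_j)=\Tr(\eta^\dagger A)$. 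Hence $\zeta\cdot\eta=\mathbb{J}_\psi(X)(\eta)$ for all $\eta\in\u(\sigma)$, and positive-definiteness of \eqref{beta} forces $\zeta=\A_\psi(X)$, which is the claimed formula. I expect the main obstacle to be bookkeeping rather than depth: one must invoke $\Pi_j\eta=\eta\Pi_j$ exactly at the block-diagonal collapse, and remember that the stated expression automatically lies in $\u(\sigma)$ even though $X$ ranges over all tangent vectors to $\S(\sigma)$ at $\psi$, not merely the horizontal ones.
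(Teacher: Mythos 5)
Your proof is correct and follows essentially the same route as the paper's: verify that $\sum_j\Pi_j\psi^\dagger X\Pi_jP^{-1}$ lies in $\u(\sigma)$ using $X^\dagger\psi+\psi^\dagger X=\0$ and the commutation with $P$, then check the defining identity $\zeta\cdot\eta=\mathbb{J}_\psi(X)(\eta)$ by collapsing the block-diagonal sums via $[\eta,\Pi_j]=\0$ and cyclicity of the trace. The only difference is that you make explicit the positive-definiteness of the inertia form (hence uniqueness of $\zeta$), which the paper leaves implicit in writing $\A_\psi=\mathbb{I}_\psi^{-1}\mathbb{J}_\psi$.
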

\begin{proof}
We note first that each $\Pi_j\psi^\dagger X\Pi_jP^{-1}$ belongs to $\u(\sigma)$. For
\begin{equation}\label{eq}
\Pi_j\psi^\dagger X\Pi_jP^{-1}=p_j^{-1}\Pi_j\psi^\dagger X\Pi_j=P^{-1}\Pi_j\psi^\dagger X\Pi_j
\end{equation}
implies that $\Pi_j\psi^\dagger X\Pi_jP^{-1}$ commutes with $P$. Furthermore, 
$\psi^\dagger\psi=P$ implies $X^\dagger\psi+\psi^\dagger X=\0$, which in turn implies that $\Pi_j\psi^\dagger X\Pi_jP^{-1}$ 
is anti-Hermitian:
\begin{equation}
\left(\Pi_j\psi^\dagger X\Pi_jP^{-1}\right)^\dagger+\Pi_j\psi^\dagger X\Pi_jP^{-1}=\Pi_j(X^\dagger\psi+\psi^\dagger X)\Pi_jP^{-1}=\0.
\end{equation}
The proposition now follows from
\begin{equation}
\begin{split}
\sum_{j=1}^l\Pi_j\psi^\dagger X\Pi_jP^{-1}\cdot\xi
&=\hbar\Tr\Big(\sum_{j=1}^l \Pi_j(X^\dagger\psi \xi+\xi^\dagger\psi^\dagger X)\Pi_j\Big)\\
&=\hbar\Tr\big(X^\dagger\psi\xi+\xi^\dagger\psi^\dagger X\big)\\
&=\mathbb{J}_\psi(X)(\xi)
\end{split}
\end{equation}
\end{proof}

\section{Geometric uncertainty relation}
The expectation value function of an observable $\obs{A}$ on $\HH$ is the function $A$ on $\D(\sigma)$ defined by $A(\rho)=\Tr(\obs{A}\rho)$.
We write $X_A$ for the Hamiltonian vector field of $A$. This field has a distinguished gauge-invariant lift $X_{\obs{A}}$ to $\S(\sigma)$,
$X_{\obs{A}}(\psi)=\obs{A}\psi/i\hbar$,
and we say that $\obs{A}$ is \emph{parallel} at a density operator $\rho$ if $X_{\obs{A}}$ is horizontal at some, hence every, purification in 
the fiber over $\rho$. Below we will show that the uncertainty of a parallel observable is proportional to the norm of the Hamiltonian 
vector field of the observable's expectation value function. 

The locked inertia tensor can be used as a tool to measure deviation from parallelism:
Given an observable $\obs{A}$ we define a $\u(\sigma)$-valued field $\xi_A$ on $\D(\sigma)$ by $\pi^*\xi_A=\A\circ X_{\obs{A}}$. 
Then $\xi_A\cdot\xi_A$ equals the square of the norm of the vertical part of $X_{\obs{A}}$. (Recall that $\cdot$ is the metric on $\u(\sigma)$ 
given by \eqref{beta}.) It is an interesting fact that $\xi_A$ is an intrinsic field for the quantum system that contains complete information 
about the expectation values of $\obs{A}$, c.f. \eqref{geomexp} below.
The opposite of parallelism we call \emph{perpendicularity}. Thus, $\obs{A}$ is perpendicular at $\rho$ if $X_{\obs{A}}$ is vertical along the 
fiber over $\rho$, or, equivalently, if $X_{\obs{A}}(\psi)=\psi\xi_A(\rho)$ for every lift $\psi$ of $\rho$. Note that $\obs{A}$ is 
perpendicular at $\rho$ provided that $\rho$ represents a mixture of eigenstates of $\obs{A}$.

\subsection{A geometric uncertainty relation}
The precision to which the value of an observable $\obs{A}$ can be known is quantified by its uncertainty function,
\begin{equation}
\Delta A(\rho)=\sqrt{\Tr(\obs A^2\rho)-\Tr(\obs A\rho)^2}.
\end{equation}
Furthermore, the precision to which the values of two observables $\obs{A}$ and $\obs{B}$ can be 
known simultaneously is limited by the Robertson-Schr\"{o}dinger uncertainty relation \cite{Robertson_1929,Schrodinger_1930}:
\begin{equation}\label{Sur}
\Delta A\Delta B\geq\sqrt{\left((A,B)-AB\right)^2+[A,B]^2}.
\end{equation}
Here $(A,B)$ and $[A,B]$ are the expectation value functions of the symmetric and antisymmetric products of $\obs{A}$ and $\obs{B}$:
\begin{equation}
(\obs{A},\obs{B})=\frac{1}{2}(\obs{A}\obs{B}+\obs{B}\obs{A}),\qquad
[\obs{A},\obs{B}]=\frac{1}{2i}(\obs{A}\obs{B}-\obs{B}\obs{A}).
\end{equation}
We derive a lower bound for $\Delta A\Delta B$ that involves only the Riemann and Poisson brackets of $A$ and $B$:
\begin{equation}
\{A,B\}_g=g(X_A,X_B),\qquad \{A,B\}_\omega=\omega(X_A,X_B).
\end{equation}
Thus we derive a geometric uncertainty relation for quantum systems in mixed states.
\begin{theorem}\label{geom uncert}
Let $\obs{A}$ and $\obs{B}$ be observables on $\HH$. Then 
\begin{equation}\label{uncertainty}
\Delta A\Delta B\geq \frac{\hbar}{2}\sqrt{\{A,B\}_g^2+\{A,B\}_\omega^2}.
\end{equation}
\end{theorem}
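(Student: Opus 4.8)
The plan is to push the inequality up to the total space $\S(\sigma)$ of the purification bundle, where the Hilbert--Schmidt structure reduces it to an application of the ordinary Cauchy--Schwarz inequality. Fix $\rho$ in $\D(\sigma)$ and a purification $\psi$ with $\pi(\psi)=\rho$, and write $X_A^{\mathrm{h}}$ and $X_B^{\mathrm{h}}$ for the horizontal parts at $\psi$ (relative to the mechanical connection $\A$ of Proposition~\ref{connect}) of the gauge-invariant lifts $X_{\obs{A}}(\psi)=\obs{A}\psi/i\hbar$ and $X_{\obs{B}}(\psi)=\obs{B}\psi/i\hbar$; since $X_{\obs{A}}$ and $X_{\obs{B}}$ are $\pi$-related to $X_A$ and $X_B$, these are the horizontal lifts of $X_A(\rho)$ and $X_B(\rho)$. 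Two facts from the preceding section let me evaluate the brackets on the total space: because $\pi$ is a Riemannian submersion, $d\pi$ is a fibrewise isometry on $\H\S(\sigma)$, so $\{A,B\}_g=g(X_A,X_B)=G(X_A^{\mathrm{h}},X_B^{\mathrm{h}})$; and because $\pi^*\omega=\Omega|_{\S(\sigma)}$ by Theorem~\ref{symplectic main theorem}, $\{A,B\}_\omega=\omega(X_A,X_B)=(\pi^*\omega)(X_A^{\mathrm{h}},X_B^{\mathrm{h}})=\Omega(X_A^{\mathrm{h}},X_B^{\mathrm{h}})$.

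Next I would use the elementary identity, immediate from \eqref{Gustaf Otto}, that $G$ and $\Omega$ are the real and imaginary parts of a single Hermitian form on $\L(\KK,\HH)$: $G(X,Y)+i\,\Omega(X,Y)=2\hbar\,\Tr(X^\dagger Y)$. Applying the Cauchy--Schwarz inequality for the Hilbert--Schmidt inner product to the vectors $X_A^{\mathrm{h}}$ and $X_B^{\mathrm{h}}$ of $\L(\KK,\HH)$, and then squaring and adding real and imaginary parts, gives
\[
\{A,B\}_g^2+\{A,B\}_\omega^2=4\hbar^2\bigl|\Tr\bigl((X_A^{\mathrm{h}})^\dagger X_B^{\mathrm{h}}\bigr)\bigr|^2\leq G(X_A^{\mathrm{h}},X_A^{\mathrm{h}})\,G(X_B^{\mathrm{h}},X_B^{\mathrm{h}}).
\]

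It remains to show $G(X_A^{\mathrm{h}},X_A^{\mathrm{h}})\leq \tfrac{2}{\hbar}\Delta A(\rho)^2$, and likewise for $\obs{B}$; substituting these two bounds into the display above and taking square roots then yields \eqref{uncertainty}. For this I would pass to the centered observable $\obs{A}_0=\obs{A}-A(\rho)\1$. Its lift $X_{\obs{A}_0}(\psi)=\obs{A}_0\psi/i\hbar$ differs from $X_{\obs{A}}(\psi)$ by the fundamental vector field $\hat{\eta}(\psi)=\psi\eta$ of $\eta=(A(\rho)/i\hbar)\1$, which lies in $\u(\sigma)$ and is therefore vertical; hence $X_{\obs{A}_0}(\psi)$ has the same horizontal part $X_A^{\mathrm{h}}$ as $X_{\obs{A}}(\psi)$. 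Since the splitting $\T\S(\sigma)=\H\S(\sigma)\oplus\V\S(\sigma)$ is $G$-orthogonal,
\[
G(X_A^{\mathrm{h}},X_A^{\mathrm{h}})\leq G\bigl(X_{\obs{A}_0}(\psi),X_{\obs{A}_0}(\psi)\bigr)=\tfrac{2}{\hbar}\Tr(\psi^\dagger\obs{A}_0^2\psi)=\tfrac{2}{\hbar}\Tr(\obs{A}_0^2\rho)=\tfrac{2}{\hbar}\Delta A(\rho)^2,
\]
using $\psi\psi^\dagger=\rho$ in the middle step.

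Most of this is bookkeeping with the reduction data; the one substantive point, and the expected obstacle, is the last inequality, together with the realization that one must center the observable — working with $X_{\obs{A}}$ itself only yields the weaker bound $\tfrac{2}{\hbar}\Tr(\obs{A}^2\rho)$. That step is where self-adjointness is used (so that $\obs{A}_0\psi/i\hbar$ is tangent to $\S(\sigma)$) and where the discarded vertical contribution $G\bigl(X_{\obs{A}_0}(\psi)-X_A^{\mathrm{h}},X_{\obs{A}_0}(\psi)-X_A^{\mathrm{h}}\bigr)$ makes the geometric relation weaker than the operator-level estimate in general. Equality there holds exactly when $X_{\obs{A}_0}(\psi)$ is horizontal; in particular, if $\obs{A}$ is parallel at $\rho$ then $A(\rho)=0$, $\obs{A}_0=\obs{A}$, and equality holds, which is the announced statement that the uncertainty of a parallel observable is proportional to $\lVert X_A\rVert_g$.
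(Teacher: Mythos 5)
Your proposal is correct and follows essentially the same route as the paper: both arguments reduce the claim to the Cauchy--Schwarz inequality for the Hilbert--Schmidt inner product applied to the horizontal components of the lifts, combined with the bound $\tfrac{\hbar}{2}\{A,A\}_g\le\Delta A^2$ coming from discarding a nonnegative vertical contribution. Your centering trick $\obs{A}\mapsto\obs{A}-A(\rho)\1$ is exactly the paper's projection of $\xi_A$ onto the orthogonal complement of $\chi=\1/i\sqrt{2\hbar}$ in $\u(\sigma)$ (the discarded term $\xi_A^\bot\cdot\xi_A^\bot$ in \eqref{main ett} is your vertical remainder), so the two proofs coincide up to bookkeeping.
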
 
\noindent For systems in pure states the uncertainty relation \eqref{uncertainty} agrees with the geometric uncertainty relation derived in 
\cite{Ashtekar_etal1998}. Moreover, it is a geometric version of \eqref{Sur}. But, as we will see, for general mixed states 
the two uncertainty relations are not equivalent. The difference is due to the multiple dimensions of the vertical bundle of $\S(\sigma)$.

\begin{proof}[Proof of theorem \ref{geom uncert}]
The expectation value functions of $\obs{A}$ and $\obs{B}$ are proportional to the lengths of the projections of $\xi_A$ and 
$\xi_B$, respectively, on the unit vector $\chi=\1/i\sqrt{2\hbar}$ in $\u(\sigma)$:
\begin{equation}
A=\sqrt{\frac{\hbar}{2}}\chi\cdot \xi_A, \qquad B=\sqrt{\frac{\hbar}{2}}\chi\cdot \xi_B. \label{geomexp}
\end{equation}
To see this suppose $\rho$ is a density operator in $\D(\sigma)$ and $\psi$ in $\S(\sigma)$ is such that $\rho=\psi\psi^\dagger$. By 
proposition \ref{connect} and \eqref{beta},
\begin{equation}
A(\rho)
=i\hbar\Tr(\A_\psi(X_{\obs{A}}(\psi))P)
=i\hbar\Tr(\xi_A(\rho)P)
=\sqrt{\frac{\hbar}{2}}\chi\cdot\xi_A(\rho),
\end{equation}
and similarly for $B(\rho)$.
Furthermore, 
\begin{equation}\label{geoprod}
(A,B)=\frac{\hbar}{2}\left(\{A,B\}_g+\xi_A\cdot\xi_B\right),
\qquad
[A,B]=\frac{\hbar}{2} \{A,B\}_\omega,
\end{equation}
because
\begin{align}
(A,B)(\rho)
&=\frac{\hbar}{2}G(X_\obs{A}(\psi),X_\obs{B}(\psi))
=\frac{\hbar}{2}g(X_A(\rho),X_B(\rho))+\frac{\hbar}{2}\xi_A(\rho)\cdot\xi_B(\rho),\\
[A,B](\rho)
&=\frac{\hbar}{2}\Omega(X_\obs{A}(\psi),X_\obs{B}(\psi))
=\frac{\hbar}{2}\omega(X_A(\rho),X_B(\rho)).
\end{align}
Denote the projections of $\xi_A$ and $\xi_B$ on the 
orthogonal complement of $\chi$ by $\xi_A^\bot$ and $\xi_B^\bot$. Then   
\begin{equation}\label{covariance}
(A,B)-AB
=\frac{\hbar}{2}\left(\{A,B\}_g+
\xi_A^\bot\cdot\xi_B^\bot\right)
\end{equation}
by \eqref{geomexp} and \eqref{geoprod}. In particular,
\begin{equation}
\Delta A^2=(A,A)-AA\geq\frac{\hbar}{2}\{A,A\}_g.\label{main ett}
\end{equation}
Let $X_{\obs{A}}^{||}$ and $X_{\obs{B}}^{||}$ be the horizontal components of $X_{\obs{A}}$ and $X_{\obs{B}}$.
The Cauchy-Schwarz inequality yields
\begin{equation}\label{estimat}
\begin{split}
\{A,A\}_g\{B,B\}_g
&=G\big(X_{\obs{A}}^{||},X_{\obs{A}}^{||}\big) G\big(X_{\obs{B}}^{||},X_{\obs{B}}^{||}\big)\\
&\geq G\big(X_{\obs{A}}^{||},X_{\obs{B}}^{||}\big)^2+\Omega\big(X_{\obs{A}}^{||},X_{\obs{B}}^{||}\big)^2\\
&= \{A,B\}_g^2+\{A,B\}_\omega^2.
\end{split}
\end{equation}
Estimate \eqref{estimat} together with \eqref{main ett} implies the uncertainty relation \eqref{uncertainty}.
\end{proof}

\subsection{Comparison of the geometric and Robertson-Schr\"{o}dinger uncertainty relations}
We can give a short geometric proof of the Robertson-Schr\"{o}dinger uncertainty relation using the expressions \eqref{geomexp}, \eqref{geoprod},  and \eqref{covariance} for the expectation value functions of $\obs{A}$ and $\obs{B}$
and their symmetric and antisymmetric products: 
\begin{equation}
\begin{split}
\Delta A^2\Delta B^2=
\frac{\hbar^2}{4}\Big(\{A,A\}_g&\{B,B\}_g
+\{A,A\}_g \xi_B^\bot\cdot\xi_B^\bot\\
&+\{B,B\}_g \xi_A^\bot\cdot\xi_A^\bot
+ \left(\xi_A^\bot\cdot\xi_A^\bot\right) \left(\xi_B^\bot\cdot\xi_B^\bot\right)\Big),\label{ettan}
\end{split}
\end{equation}
and
\begin{equation}
\begin{split}
\big((A,B)-AB\big)^2+[A,B]^2
=\frac{\hbar^2}{4}&\Big(\{A,B\}_g^2+\{A,B\}_\omega^2\\
&+2\{A,B\}_g \xi_A^\bot\cdot\xi_B^\bot
+\left(\xi_A^\bot\cdot\xi_B^\bot\right)^2\Big).\label{tvaan}
\end{split}
\end{equation}
Now \eqref{Sur} follows from \eqref{estimat}, \eqref{ettan}, \eqref{tvaan}, and the two estimates
\begin{align}
&\{A,A\}_g \xi_B^\bot\cdot\xi_B^\bot
+\{B,B\}_g \xi_A^\bot\cdot\xi_A^\bot\geq 
2\{A,B\}_g \xi_A^\bot\cdot\xi_B^\bot,\\
&\!\left(\xi_A^\bot\cdot\xi_A^\bot\right) \left(\xi_B^\bot\cdot\xi_B^\bot\right)\geq
\left(\xi_A^\bot\cdot\xi_B^\bot\right)^2.
\end{align}

Apparently, the difference between the geometric and Robertson-Schr\"{o}dinger uncertainty relations lies in the term $2\{A,B\}_g 
\xi_A^\bot\cdot\xi_B^\bot+\left(\xi_A^\bot\cdot\xi_B^\bot\right)^2$. Equation \eqref{tvaan} shows that \eqref{uncertainty} is a geometric equivalent of 
\eqref{Sur} if $2\{A,B\}_g \xi_A^\bot\cdot\xi_B^\bot+\left(\xi_A^\bot\cdot\xi_B^\bot\right)^2=0$.
This is, e.g., the case if $\obs{A}$ or $\obs{B}$ is parallel, or if the density operators in $\D(\sigma)$ represent pure states, in which case 
the vertical bundle has $1$-dimensional fibers. In general, however, the two relations are not equivalent. Equation \eqref{tvaan} also shows that the right hand side of \eqref{uncertainty} interpolates between the two sides of \eqref{Sur} if 
$2\{A,B\}_g \xi_A^\bot\cdot\xi_B^\bot+\left(\xi_A^\bot\cdot\xi_B^\bot\right)^2<0$, and that the 
right hand side of \eqref{Sur} interpolates between the two sides of \eqref{uncertainty} if 
$2\{A,B\}_g \xi_A^\bot\cdot\xi_B^\bot+\left(\xi_A^\bot\cdot\xi_B^\bot\right)^2>0$.
In fact, the maximum of the geometric relation and the Robertson-Schr\"{o}dinger relation 
is a geometric uncertainty relation since the fields $\xi_A$ and $\xi_B$ are intrinsic:
\begin{equation}
\Delta A\Delta B\geq\frac{\hbar}{2}\sqrt{\{A,B\}_g^2+\{A,B\}_\omega^2+\max\{0,2\{A,B\}_g\xi_A^\bot\cdot\xi_B^\bot+(\xi_A^\bot\cdot\xi_B^\bot)^2\}}.
\end{equation}

\subsection{Ensembles of spins}
In this section we give examples of observables for which the geometric uncertainty relation provides a greater 
lower bound for the product of the observables' uncertainties than the Robertson-Schr\"{o}dinger 
relation, and vice versa. But we start with a well-known example for which of the 
two relations provide the same lower bound.

Let $\obs{S}=(\obs{S}_x,\obs{S}_y,\obs{S}_z)$ be the spin operator, and write $\ket{s,m}$ for the
state which is certain to have spin $s$ and magnetic quantum number $m$. Recall that $\ket{s,m}$ is the common eigenstate of
$\obs{S}^2$ and $\obs{S}_z$ such that $\obs{S}^2\ket{s,m}=\hbar^2 s(s+1)\ket{s,m}$ and 
$S_z\ket{s,m}=\hbar m\ket{s,m}$. Also recall that 
$\obs{S}_x=\frac{1}{2}(\obs{S}_+ +\obs{S}_-)$ and $\obs{S}_y=\frac{1}{2i}(\obs{S}_+ -\obs{S}_-)$,
where the raising and lowering operators $\obs{S}_+$ and $\obs{S}_-$ are defined by
\begin{equation}\label{plusminus}
\obs{S}_{\pm}\ket{s,m}=\hbar a^\pm_m\ket{s,m\pm 1},\qquad a^\pm_m=\sqrt{s(s+1)-m(m\pm 1)}.
\end{equation}
We consider an ensemble of spin-$s$ particles prepared so that the proportion of particles having quantum number 
$m_j$ is $p_j$, for $j=1,2,\dots k$. The $m_j$:s should not be confused with the multiplicities of the $p_j$:s. In fact, here we assume that 
that we are in the generic situation where each $p_j$ is nondegenerate.
The spin part of the ensemble's wave function can be 
represented by the density operator $\rho=\sum_j p_j\ketbra{s,m_j}{s,m_j}$. We fix  
$\psi=\sum_j\sqrt{p_j}\ket{s,m_j}\bra{j}$ in the fiber over $\rho$.

The components at $\psi$ of the vector fields on $\S(\sigma)$ associated with $\obs{S}_x$ and $\obs{S}_y$ are  
\begin{align}
X_{\obs{S}_x}(\psi)&=\frac{1}{2i}\sum_{j=1}^k\sqrt{p_j}\big(a^+_{m_j}\ket{s,m_j+1}+a^-_{m_j}\ket{s,m_j-1}\big)\bra{j},\\
X_{\obs{S}_y}(\psi)&=-\frac{1}{2}\sum_{j=1}^k\sqrt{p_j}\big(a^+_{m_j}\ket{s,m_j+1}-a^-_{m_j}\ket{s,m_j-1}\big)\bra{j}.
\end{align}
These vectors are horizontal because the spectrum of $\rho$ is nondegenerate and the matrices $[\bra{i}\psi^\dagger X_{\obs{S}_x}(\psi)\ket{j}]$ 
and $[\bra{i}\psi^\dagger X_{\obs{S}_y}(\psi)\ket{j}]$ have only zeros on their diagonals. Moreover, they are orthogonal since $(S_x,S_y)=0$. 
However, $[S_x,S_y]\ne 0$:
\begin{equation}
\{S_x,S_y\}_\omega(\rho)=\hbar\sum_{j=1}^k p_jm_j.
\end{equation}
We conclude that
\begin{equation}\label{sista}
\Delta S_x(\rho)\Delta S_y(\rho)\geq\frac{\hbar^2}{2}\Big|\sum_{j=1}^k p_jm_j\Big|.
\end{equation}
One identifies the right hand side of \eqref{sista} as 
$\hbar/2$ times the modulus of $S_z(\rho)$.

Next define four observables 
\begin{equation}
\obs{A}=\obs{S}_x+\sqrt{\eps}\obs{S}_z,
\quad\obs{B}=\obs{S}_x-\sqrt{\eps}\obs{S}_z,
\quad\obs{C}=\obs{S}_x+\obs{S}_z,
\quad\obs{D}=\obs{S}_y+\obs{S}_z,
\end{equation}
where $0<\eps<1/2s$. We have that
\begin{align}
&X_{\obs{A}}^{||}(\psi)=X_{\obs{B}}^{||}(\psi)=X_{\obs{C}}^{||}(\psi)=X_{\obs{S}_x}(\psi),\quad X_{\obs{D}}^{||}(\psi)=X_{\obs{S}_y}(\psi),\\
&\xi_A(\rho)=\sqrt{\eps}\xi_{S_z}(\rho),\quad \xi_B(\rho)=-\sqrt{\eps}\xi_{S_z}(\rho),\quad\xi_C(\rho)=\xi_D(\rho)=\xi_{S_z}(\rho),
\end{align}
because $\obs{S}_x$ and $\obs{S}_y$ are parallel and $\obs{S}_z$ is perpendicular at $\rho$. Consequently,
\begin{align}
&\{A,B\}_g(\rho)+\xi_A^\bot(\rho)\cdot\xi_B^\bot(\rho)
=\{S_x,S_x\}_g(\rho)-\eps\xi_{S_z}^\bot(\rho)\cdot\xi_{S_z}^\bot(\rho),\\
&\{C,D\}_g(\rho)+\xi_C^\bot(\rho)\cdot\xi_D^\bot(\rho)=\{S_x,S_y\}_g(\rho)+\xi_{S_z}^\bot(\rho)\cdot\xi_{S_z}^\bot(\rho).
\end{align}

As mentioned before, the vectors $X_{\obs{S}_x}(\psi)$ and $X_{\obs{S}_y}(\psi)$ are orthogonal. Moreover, $\xi_{S_z}^\bot(\rho)$ is nonzero. 
Therefore
$0=\{C,D\}_g(\rho)<\{C,D\}_g(\rho)+\xi_C^\bot(\rho)\cdot\xi_D^\bot(\rho)$.
From this it follows that the the Robertson-Schr\"{o}dinger relation provides a greater lower bound for $\Delta C\Delta D$ at $\rho$ than the
geometric uncertainty relation. However, the situation is reversed for the observables $\obs{A}$ and $\obs{B}$. Explicitly,
\begin{equation}
\!\{S_x,S_x\}_g(\rho)=\hbar s(s+1)-\hbar\sum_{j=1}^k m_j^2 p_j,
\end{equation}
and 
\begin{equation}
\xi_{S_z}^\bot(\rho)\cdot\xi_{S_z}^\bot(\rho)=2\hbar\sum_{j=1}^k m_j^2p_j-2\hbar\Big(\sum_{j=1}^k m_jp_j\Big)^2.
\end{equation}
The bounds on $\eps$ are chosen such that
\begin{equation}
0<2\eps\hbar\sum_{j=1}^k m_j^2p_j-2\eps\hbar\Big(\sum_{j=1}^k m_jp_j\Big)^2<\hbar s(s+1)-\hbar\sum_{j=1}^k m_j^2 p_j.
\end{equation}
Accordingly, $0<\{A,B\}_g(\rho)+\xi_A^\bot(\rho)\cdot\xi_B^\bot(\rho)<\{A,B\}_g(\rho)$.
From this it follows that the geometric uncertainty relation provides a greater lower bound for $\Delta A(\rho)\Delta B(\rho)$ than
the Robertson-Schr\"{o}dinger relation.

\section{Conclusion}
In this paper we have equipped the orbits of isospectral finite rank density operators
with Riemannian and symplectic structures, and we have derived a geometric uncertainty principle for observables acting on quantum systems in mixed states. Moreover, we have compared this uncertainty relation with the Robertson-Schr\"{o}dinger uncertainty relation.
It turned out that the two relations in general are not equivalent.

\end{document}